\providecommand{\U}[1]{\protect\rule{.1in}{.1in}}
\newtheorem{theorem}{Theorem}
\newtheorem{conjecture}[theorem]{Conjecture}
\newtheorem{example}[theorem]{Example}
\newtheorem{observation}[theorem]{Observation}
\newtheorem{proposition}[theorem]{Proposition}
\newtheorem{remark}[theorem]{Remark}
\newenvironment{proof}[1][Proof]{\noindent\textbf{#1}\ \ }{\hfill\rule{0.5em}{0.5em}}
\def\tr{\operatorname{tr}}
\def\1{\openone}
\newcommand{\ket}[1]{|#1 \rangle}
\newcommand{\bra}[1]{\langle #1|}
\let\originalleft\left
\let\originalright\right
\renewcommand{\left}{\mathopen{}\mathclose\bgroup\originalleft}
\renewcommand{\right}{\aftergroup\egroup\originalright}
\newcommand{\be}{\begin{equation}}
\newcommand{\ee}{\end{equation}}
\newcommand{\bea}{\begin{eqnarray}}
\newcommand{\eea}{\end{eqnarray}}
\begin{document}

\title{On the effectiveness of Bayesian discrete feedback for quantum information reclaiming}

\author{Milajiguli Rexiti$^{1,2}$, Samad Khabbazi Oskouei$^3$, and Stefano Mancini$^{1,2}$\\ \\
\small $^1$School of Science and Technology, University of Camerino,\\
\small Via Madonna dell Carceri 9, 62032 Camerino, Italy\\
\small $^2$INFN, Sezione di Perugia, Via A. Pascoli, 06123 Perugia, Italy\\
\small $^3$Department of Mathematics, Varamin-Pishva Branch, Islamic Azad University, \\
\small Varamin 33817-7489, Iran
}

\maketitle

\begin{abstract}
We consider discrete time feedback aimed at reclaiming quantum information after a channel action.
We compare Bayesian and Markovian strategies. We show that the former does not offer any advantage for qubit channels, while its superior performance can appear in higher dimensional channels. This is witnessed by cases study for qutrit channels.
\end{abstract}


\section{Introduction}

Controlling quantum systems becomes of paramount importance with the advent of quantum technologies.
The theory of feedback control shares many analogies with the theory of error correction.
A measurement is performed to get information (likewise error syndrome is extracted) and then actuation is performed (likewise error recovery is carried out).
These become particularly evident when employing feedback for coherence recovery. To this end the prototypical model
was proposed in Refs.\cite{GW03,GW04}. The idea was that by having access to the environment, the unitary part of each error can be corrected.
In Ref.\cite{MCM11} it was put forward that the shceme needs an optimization also on the measurements.

The multistep version leads to the possibility of considering Marokovian or Bayesian realizations.
In the former, differently from the latter, the measurement and correction are performed  at each step without memory of previous steps.

Bayesian feedback is not much explored because of its intrinsic difficulties \cite{RNM20,WMW02}. For it, in the discrete version devoted to coherence recovery, it has been shown (at least for qubit channels) that  the actuation only in the last step is equivalent to the actuation at each step \cite{GW04}. However, the problem of Bayesian vs Markovian feedback was not addressed.

Here, in the context of discrete feedback aimed at quantum coherence recovery, we compare the two strategies and show that Bayesian feedback does not offer any advantage for qubit channels, while its superior performance can appear in higher dimensional channels. This is witnessed by cases study for qutrit channels.


\section{The feedback model}

A quantum channel is a linear, completely positive and trace preserving (CPTP) map on the set ${\mathfrak{S}}(\cal H)$ of density operators over a Hilbert space $\cal H$. Every quantum channel ${\cal T}:{\mathfrak{S}}(\cal H)\to {\mathfrak{S}}(\cal H)$ can be expressed in the operator sum (Kraus) representation:
\begin{equation}\label{kr}
{\cal T}(\rho)=\sum_x T_x\rho T_x^\dag,
\end{equation}
where $T_x:{\cal H}\to{\cal H}$ are linear operators (called Kraus operators) satisfying the normalization condition $\sum_x T_x^\dag T_x=I$. The number of non-zero operators in the Kraus representation is called the Kraus rank and it is known to be at maximum $d^2$, where $d$ is the dimension of $\cal H$.

Every channel can be seen as arising from a unitary interaction between the system of interest and the environment after discarding this latter. We assume here that the environment is accessible and hence measurable.
We denote by $\{x\}$ the measurement outcomes corresponding to the positive operator valued measure (POVM)
$\{T_x^\dag T_x\}$.

Notice that given a channel $\cal T$, its Kraus decomposition is not unique. Hence different Kraus decompositions correspond to different measurements performed on the environment to gather information about the system of interest.

To recover the state which has undergone the channel action, we exploit the information gathered from environment,
hence the feedback scheme.
Actually, we
apply recovery operators $\{R_x\}$ (conditioned to the outcomes $\{x\}$) in such a way that $ \sum_x R_x T_x \rho T_x^\dag R_x^\dag $ is as close as possible to the initial state $\rho$.

Let the polar decomposition of the Kraus operators be $T_x=V_x |T_x|$, with $V_x$ the unitary part of $T_x$. Then,
$V_x$ represents the invertible part of the occurred error $T_x$. Thus we choose the recovery $R_x=V_x^\dag$ \cite{GW03,GW04}. As a consequence, after correction the channel outcome becomes:
\be\label{1app}
{\cal T}_{corr}(\rho)=\sum_x |T_x|\rho |T_x|.
\ee
The figure of merit to evaluate the effectiveness of this feedback scheme is the entanglement fidelity of the channel (see e.g. \cite{MW21})
\be\label{fiddef}
F({\cal T}_{corr}):=\langle\Psi| ({\rm id}\otimes {\cal T}_{corr}) (\Psi) |\Psi\rangle,
\ee
where $|\Psi\rangle\in{\cal H}\otimes{\cal H}$ is a maximally entangled state and $\Psi\equiv|\Psi\rangle\langle\Psi|$.
It turns out that
\be\label{fid}
F({\cal T}_{corr})=\frac{1}{d^2}\sum_x \left( \tr |T_x|\right)^2.
\ee
When this quantity reaches the value one, it means that the identity channel is recovered.

Now, assume that the channel is applied twice and after each application of the channel we use the feedback action.
Then we can have two different scenarios (see Fig.\ref{2fb}): i) at the second step, the measurement result on the environment is fed back to the system in an independent way from the first step (we refer to this -- non adaptive -- strategy as \textit{Markovian feedback}); ii) at the second step, the measurement result on the environment is fed back to the system in a way that also depends on the first step (we refer to this -- adaptive -- strategy as \textit{Bayesian feedback}).

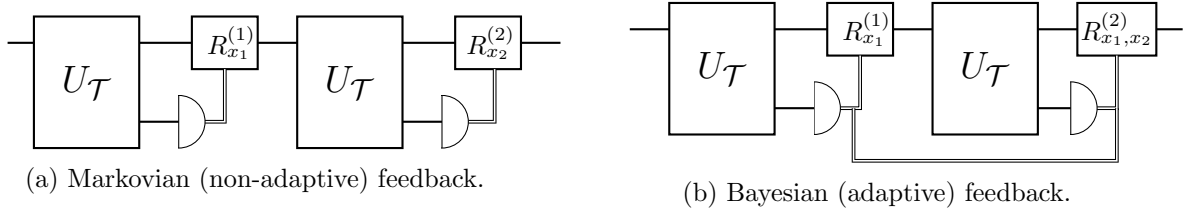
\begin{figure}[H]
	\begin{center}
		\begin{subfigure}[bh]{0.4\textwidth}
		\centering
			\begin{tikzpicture}[scale=0.35]
			\draw[thick] (-1,2) -- (0,2);\draw[thick] (4,2) -- (6,2);\draw[thick] (8.5,2) -- (10,2);\draw[thick] (14,2) -- (16,2);\draw[thick] (18.5,2) -- (20,2);
			\draw[thick](0,3)rectangle(4,-2);\draw[thick](10,3)rectangle(14,-2);
			\draw[thick](6,3)rectangle(8.5,1); \draw[thick](16,3)rectangle(18.5,1);
			\draw[thick] (4,-1) -- (5.5,-1); \draw[thick] (14,-1) -- (15.5,-1);
			\draw[double](6.5,-1)--(7.25,-1)--(7.25,1);
			\draw[double](16.5,-1)--(17.5,-1)--(17.5,1);
				\begin{scope}
    				\clip (5.5,0) rectangle (6.5,-2);
 				\draw (5.5,-1) circle(1);
    				\draw (5.5,0) -- (5.5,-2);
				\end{scope}
				\begin{scope}
    				\clip (15.5,0) rectangle (18,-2);
   		 		\draw (15.5,-1) circle(1);
				\draw (15.5,0) -- (15.5,-2);
				\end{scope}
			\draw(2,0.5) node[font = \fontsize{15}{15}]{$U_{\cal T}$};\draw(12,0.5) node[font = \fontsize{15}{15}]{$U_{\cal T}$};
			\draw(7.5,2) node[font = \fontsize{10}{10}]{$R_{x_1}^{(1)}$};\draw(17.35,2) node[font = \fontsize{10}{10}]{$R_{x_2}^{(2)}$};
			\end{tikzpicture}
 			\caption{Markovian (non-adaptive) feedback.}
      			\label{M1}
		\end{subfigure}
		\quad\quad\quad\quad
		\begin{subfigure}[bh]{0.4\textwidth}
		\centering
			\begin{tikzpicture}[scale=0.35]
				\draw[thick] (-1.5,2) -- (0,2);\draw[thick] (4,2) -- (6,2);
				\draw[thick] (8.5,2) -- (10,2);\draw[thick] (14,2) -- (15.5,2);						
				\draw[thick] (18.5,2) -- (19.5,2);
				\draw[thick](0,3)rectangle(4,-2);\draw[thick](10,3)rectangle(14,-2);
				\draw[thick](6,3)rectangle(8.5,1); \draw[thick](15.5,3)rectangle(18.5,1);
				\draw[thick] (4,-1) -- (5.5,-1); \draw[thick] (14,-1) -- (15.25,-1);
				\draw[double](6.5,-1)--(7.25,-1)--(7.25,1);
				\draw[double](16.25,-1)--(17,-1)--(17,1);
				\draw[double](7,-1)--(7,-3)--(17,-3)--(17,-1);
					\begin{scope}
    					\clip (5.5,0) rectangle (6.5,-2);
    					\draw (5.5,-1) circle(1);
    					\draw (5.5,0) -- (5.5,-2);
    					\end{scope}
					\begin{scope}
    					\clip (15.25,0) rectangle (16.25,-2);
    					\draw (15.25,-1) circle(1);
    					\draw (15.25,0) -- (15.25,-2);
					\end{scope}
			\draw(2,0.5) node[font = \fontsize{15}{15}]{$U_{\cal T}$};\draw(12,0.5) node[font = \fontsize{15}{15}]{$U_{\cal T}$};
			\draw(7.5,2) node[font = \fontsize{10}{10}]{$R_{x_1}^{(1)}$};\draw(17,2) node[font = \fontsize{10}{10}]{$R_{x_1,x_2}^{(2)}$};
			\end{tikzpicture}
 	\caption{ Bayesian (adaptive) feedback.}
      \label{M2}
	\end{subfigure}
\end{center}
\caption{Two-step feedback.}
\label{2fb}
\end{figure}

According to \eqref{1app}, the output after the second action of the channel will be
\begin{align}
{\cal T}\circ {\cal T}_{\rm corr}(\rho)=\sum_{x_2} T^{(2)}_{x_2} \left(\sum_{x_1} \left| T^{(1)}_{x_1} \right| \rho \left| T^{(1)}_{x_1} \right|\right) {T^{(2)}_{x_2}}^\dag,
\end{align}
where, to be general, we considered different sets of Kraus operators $\{T^{(1)}_{x_1}\}$ and $\{T^{(2)}_{x_2}\}$ for the channel ${\cal T}$ in the two steps.
This corresponds to the possibility of performing different kind of measurements on the environment at each step.

Then operating the second feedback step according to the Markovian strategy we will get
\begin{align}
({\cal T}_{\rm corr} \circ {\cal T}_{\rm corr})(\rho)=\sum_{x_1,x_2} \left(
\left| T^{(2)}_{x_2} \right|  \, \left| T^{(1)}_{x_1} \right|\right) \, \rho \, \left( \left| T^{(1)}_{x_1} \right| \, \left| T^{(2)}_{x_2}
\right| \right).
\end{align}
In contrast, operating the second feedback step according to the Bayesian strategy we will get
\begin{align}
\left({\cal T}\circ {\cal T}_{\rm corr}\right)_{\rm corr}(\rho)=\sum_{x_1,x_2} \left(
\left| T^{(2)}_{x_2}  \, \left| T^{(1)}_{x_1} \right|\right|\right) \, \rho \, \left(\left| \left|{T^{(1)}_{x_1}}^\dag \right| \, {T^{(2)}_{x_2}}^\dag \right| \right).
\end{align}
As a consequence of \eqref{fid} we will have the entanglement fidelity for the two cases as
\begin{align}
F({\cal T}_{\rm corr} \circ {\cal T}_{\rm corr})&=\frac{1}{d^2}\sum_{x_1,x_2} \left( {\rm tr} \left|{T}^{(2)}_{x_2}\right|\,
\left|{T}^{(1)}_{x_1}\right|
\right)^2,\\
F\left(\left({\cal T} \circ {\cal T}_{\rm corr}\right)_{\rm corr}\right)&=\frac{1}{d^2}\sum_{x_1,x_2} \left( {\rm tr} \left| {T}^{(2)}_{x_2} \, \left| {T}^{(1)}_{x_1}\right|\right|
\right)^2,
\end{align}
respectively.

Extending the above reasoning to $n$ feedback steps we will arrive to
\begin{align}
F_n\equiv F\left(\underbrace{{\cal T}_{\rm corr}\circ \ldots \circ {\cal T}_{\rm corr}}_{n-\text{times}}\right)
&=\frac{1}{d^2}\sum_{x_1,\ldots,x_n} \left( {\rm tr} \left| T^{(n)}_{x_n}\right| \ldots \left| T^{(1)}_{x_1}\right|
\right)^2,
\end{align}
and
\begin{align}
F'_n\equiv F\left(
\underbrace{\left({\cal T}\circ \ldots \left({\cal T} \circ {\cal T}_{\rm corr}\right)_{\rm corr} \ldots
\right)_{\rm corr}}_{n-\text{times}}
\right)&=\frac{1}{d^2}\sum_{x_1,\ldots, x_n} \left( {\rm tr} \left| T^{(n)}_{x_n} \ldots
\left| T^{(2)}_{x_2} \,
\left| T^{(1)}_{x_1}\right|\right|\right|
\right)^2.
\end{align}

\begin{observation}\label{obs}
It is $F'_n\geq F_n$. This simply follows from the fact that for each $k$ such that $2\leq k \leq n$ it is
\begin{align}
{\rm tr} \left| T^{(k)}_{x_k} \left| T^{(k-1)}_{x_{k-1}} \ldots \right| \right|
={\rm tr} \left| V^{(k)}_{x_k} \left|T^{(k)}_{x_k}\right| \, \left| T^{(k-1)}_{x_{k-1}} \ldots \right| \right|
= {\rm tr} \left| \left| T^{(k)}_{x_k} \right| \, \left| T^{(k-1)}_{x_{k-1}} \ldots \right| \right|
\geq  {\rm tr} \left| T^{(k)}_{x_k} \right| \left| T^{(k-1)}_{x_{k-1}} \ldots \right|.
\end{align}
The second equality follows from the unitary invariance of the 1-norm, while the inequality is consequence of the
property $\sum_i \lambda_i(A) \leq \sum_i |\lambda_i(A)| \leq \sum_i \sigma_i(A)$, with $\lambda_i(A)$ (resp. $\sigma_i(A)$)
denoting the eigenvalues (resp. singular eigenvalues) of a square matrix $A$ \cite{HLA}.
\end{observation}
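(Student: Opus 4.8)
The plan is to prove $F'_n\ge F_n$ term by term in the sum over the measurement outcomes $(x_1,\dots,x_n)$. Write $P^{(k)}_{x_k}:=|T^{(k)}_{x_k}|$, and set $B_1:=P^{(1)}_{x_1}$, $B_k:=|T^{(k)}_{x_k}B_{k-1}|$ for $k\ge2$, so that the Bayesian contribution of a fixed multi-index is $(\tr B_n)^2$ and the Markovian one is $\big(\tr(P^{(n)}_{x_n}\cdots P^{(1)}_{x_1})\big)^2$. Since $B_n\ge0$ and the trace of a product of positive semidefinite operators is non-negative, it suffices to prove, for each fixed multi-index,
\[
\tr B_n\;\ge\;\big|\tr\!\big(P^{(n)}_{x_n}\cdots P^{(1)}_{x_1}\big)\big|,
\]
after which squaring and summing over $(x_1,\dots,x_n)$ gives the claim.

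The engine is a single ``peeling'' move applied to the outermost layer. Using the polar decomposition $T^{(k)}_{x_k}=V^{(k)}_{x_k}P^{(k)}_{x_k}$, I would invoke (i) the unitary invariance of the trace norm, $\tr|UA|=\tr|A|$, to replace $T^{(k)}_{x_k}$ by $P^{(k)}_{x_k}$ underneath the outermost absolute value, and (ii) the elementary spectral bound $|\tr A|\le\sum_i|\lambda_i(A)|\le\sum_i\sigma_i(A)=\tr|A|$ to trade that outermost absolute value for the bare enclosed product. Concretely this gives $\tr B_n=\tr|P^{(n)}_{x_n}B_{n-1}|\ge\tr(P^{(n)}_{x_n}B_{n-1})$: one layer of nesting has been stripped at the price of an enclosing positive factor. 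The hope is that iterating this move from the outside inward removes all the absolute values and lands on the Markovian product.

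The step I expect to be the genuine obstacle is closing this iteration. After the first peeling we are left with $\tr(P^{(n)}_{x_n}B_{n-1})$, and to continue one wants to replace the still-nested $B_{n-1}=|T^{(n-1)}_{x_{n-1}}B_{n-2}|$ by $P^{(n-1)}_{x_{n-1}}B_{n-2}$ inside the trace against the positive factor $P^{(n)}_{x_n}$; but for a generic positive operator $P$ and generic $X$ neither $\tr(PX)\le\tr(P|X|)$ nor the reverse holds, so the naive recursion does not go through. The fix I would pursue is to recast the statement as an induction on the number $n$ of feedback steps with a strengthened hypothesis that carries along an arbitrary positive semidefinite prefactor, so that the inductive step becomes the peeling move above applied to a suitably enlarged expression; the point is that the operators sitting inside the nested absolute values are themselves products of the positive operators $P^{(j)}_{x_j}$ (for a single positive factor the nesting would be invisible, so it is precisely the structure of these products that must be exploited, e.g. via the fact that $T^{(k)}_{x_k}B_{k-1}$ is a product of positive operators). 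Once the termwise inequality is secured, summing over $(x_1,\dots,x_n)$ and restoring the factor $1/d^2$ gives $F'_n\ge F_n$; note that the unitary-invariance step also makes transparent that the two strategies coincide whenever every $T^{(k)}_{x_k}$ is already positive, i.e. all $V^{(k)}_{x_k}=\1$.
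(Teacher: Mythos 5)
Your reduction to a termwise inequality and your ``peeling'' engine---polar decomposition, unitary invariance of the trace norm, and $|\tr A|\le\sum_i\sigma_i(A)=\tr|A|$---are exactly the ingredients of the paper's own justification, which consists of the single chain
$\tr\bigl|T^{(k)}_{x_k}\bigl|T^{(k-1)}_{x_{k-1}}\cdots\bigr|\bigr|
=\tr\bigl|\,|T^{(k)}_{x_k}|\,\bigl|T^{(k-1)}_{x_{k-1}}\cdots\bigr|\,\bigr|
\ge\tr\bigl(|T^{(k)}_{x_k}|\,\bigl|T^{(k-1)}_{x_{k-1}}\cdots\bigr|\bigr)$
asserted for each $k$. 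Applied at the outermost layer and followed by squaring and summing, this settles the case $n=2$, and up to that point your write-up is correct and matches the paper. (A minor caveat: ``the trace of a product of positive semidefinite operators is non-negative'' is guaranteed only for two factors; for three or more the Markovian trace need not even be real, which is why your formulation with $|\tr(\cdot)|$ is the right target.)

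The genuine gap is the one you flag yourself: for $n\ge3$ the peeling does not iterate, and the repair you sketch---an induction whose hypothesis carries an arbitrary positive semidefinite prefactor---is not merely unproven, it is false. The would-be inductive step is $\tr\bigl(C\,|P_2P_1|\bigr)\ge\bigl|\tr\bigl(CP_2P_1\bigr)\bigr|$ for positive semidefinite $C,P_2,P_1$, and it fails even though the matrix inside the absolute value is a product of positive operators: take
\begin{equation*}
C=P_2=\begin{pmatrix}1&0\\0&0\end{pmatrix},\qquad P_1=\frac12\begin{pmatrix}1&1\\1&1\end{pmatrix},
\end{equation*}
so that $P_2P_1=\frac12\begin{pmatrix}1&1\\0&0\end{pmatrix}$ and $|P_2P_1|=\frac{1}{\sqrt2}P_1$, giving $\tr\bigl(C|P_2P_1|\bigr)=\frac{1}{2\sqrt2}<\frac12=\tr\bigl(CP_2P_1\bigr)$. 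Thus the product structure you hoped to exploit does not rescue the weighted inequality (the leftover prefactor after the first peeling, $|T^{(n)}_{x_n}|$, is an essentially arbitrary positive contraction), and your argument as written proves the Observation only for $n=2$, leaving $n\ge3$ open. In fairness, the paper's text supplies only the one-layer chain you reproduced and is silent on how the layers compose, so a complete argument for $n\ge3$ requires an additional ingredient---for instance exploiting the Kraus completeness relation $\sum_x T_x^\dagger T_x=I$ together with the sum over outcomes, in the spirit of the paper's qubit Proposition---rather than a termwise weighted inequality, which the counterexample above rules out.
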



Then, denoting by $\{\widetilde{T}^{(k)}_{x_k}\}$ the optimal Kraus decomposition at the $k$-th step, it will be
\begin{align}
F_n=\frac{1}{d^2}\sum_{x_1,\ldots,x_n} \left( {\rm tr} \left| \widetilde{T}^{(n)}_{x_n}\right| \ldots \left| \widetilde{T}^{(1)}_{x_1}\right|
\right)^2,
\end{align}
and
\begin{align}
F'_n&=\frac{1}{d^2}\sum_{x_1,\ldots, x_n} \left( {\rm tr} \left| \widetilde{T}^{(n)}_{x_n} \ldots
\left| \widetilde{T}^{(2)}_{x_2} \,
\left| \widetilde{T}^{(1)}_{x_1}\right|\right|\right|
\right)^2.
\end{align}


\begin{observation}\label{obs1}
If it happens that
\begin{equation}
\left[ \left|\widetilde{T}^{(i)}_{x_k}\right| \, , \, \left|\widetilde{T}^{(j)}_{x_l}\right|\right]=0, \qquad \forall i,j \quad \text{and}\quad \forall x_k, x_l,
\end{equation}
then $F_n=F'_n$, because in such a case
$\left| \widetilde{T}^{(n)}_{x_n} \ldots
\left| \widetilde{T}^{(2)}_{x_2} \,
\left| \widetilde{T}^{(1)}_{x_1}\right|\right|\right|
= \left| \widetilde{T}^{(n)}_{x_n}\right| \ldots \left| \widetilde{T}^{(1)}_{x_1}\right|$.
\end{observation}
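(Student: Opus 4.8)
The plan is to show that, under the stated commutation hypothesis, each nested modulus appearing in the summand of $F'_n$ collapses to the plain product of moduli appearing in the summand of $F_n$; equality of the two fidelities then follows term by term, exactly as stated.

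First I would record the elementary linear-algebra fact that for a finite family of \emph{pairwise commuting} positive semidefinite operators $A_1,\ldots,A_m$ on $\cal H$ there is a single orthonormal basis of $\cal H$ in which all the $A_i$ are simultaneously diagonal with nonnegative entries; this is the spectral theorem for commuting Hermitian operators. Consequently any product $A_{i_1}\cdots A_{i_r}$ of members of the family is again diagonal in that basis with nonnegative entries, hence positive semidefinite, hence equal to its own modulus. It is here that the hypothesis is genuinely used: without commutativity the product of two positive semidefinite operators need not even be Hermitian. In our situation the relevant operators are the $\left|\widetilde{T}^{(i)}_{x_k}\right|$, which are positive semidefinite by construction (moduli of Kraus operators), so the fact applies to every subcollection of them.

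Next I would prove, by induction on $k$, that
\begin{align}
\left| \widetilde{T}^{(k)}_{x_k} \ldots \left| \widetilde{T}^{(2)}_{x_2}\, \left| \widetilde{T}^{(1)}_{x_1}\right|\right|\right|
&= \left| \widetilde{T}^{(k)}_{x_k}\right| \ldots \left| \widetilde{T}^{(1)}_{x_1}\right|
\end{align}
for every $k$. The base case $k=1$ is immediate since $\left|\,\left|\widetilde{T}^{(1)}_{x_1}\right|\,\right|=\left|\widetilde{T}^{(1)}_{x_1}\right|$. For the inductive step, by the inductive hypothesis the inner nested expression equals $P:=\left|\widetilde{T}^{(k-1)}_{x_{k-1}}\right|\cdots\left|\widetilde{T}^{(1)}_{x_1}\right|$, which by the fact above is positive semidefinite; then so is $\left|\widetilde{T}^{(k)}_{x_k}\right| P$, being a product of commuting positive semidefinite operators, and therefore $\left|\,\left|\widetilde{T}^{(k)}_{x_k}\right| P\,\right|=\left|\widetilde{T}^{(k)}_{x_k}\right| P$. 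Combining this with the identity $\left|\widetilde{T}^{(k)}_{x_k}\left|\widetilde{T}^{(k-1)}_{x_{k-1}}\ldots\right|\right|=\left|\,\left|\widetilde{T}^{(k)}_{x_k}\right|\,\left|\widetilde{T}^{(k-1)}_{x_{k-1}}\ldots\right|\,\right|$, which follows from unitary invariance of the modulus as in Observation~\ref{obs}, yields the claimed equality at level $k$.

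Finally, setting $k=n$, taking the trace of both sides, squaring, and summing over all outcomes $(x_1,\ldots,x_n)$ shows that the defining sum for $F'_n$ coincides, term by term, with that for $F_n$, so $F_n=F'_n$. I do not expect a genuine obstacle; the only point requiring care is that the peeling-off of one modulus at a time rests on the product of the remaining factors being positive semidefinite, which in turn needs the commutation relations to hold between \emph{all} pairs $\left|\widetilde{T}^{(i)}_{x_k}\right|,\left|\widetilde{T}^{(j)}_{x_l}\right|$ and not merely between consecutive factors — so the hypothesis must be invoked at every step of the induction.
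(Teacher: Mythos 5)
Your proof is correct and follows essentially the same route as the paper: the paper's justification is precisely the collapse of the nested moduli to the plain product $\left|\widetilde{T}^{(n)}_{x_n}\right|\cdots\left|\widetilde{T}^{(1)}_{x_1}\right|$, which you establish in detail via unitary invariance of the modulus (as in Observation~\ref{obs}) together with the fact that products of pairwise commuting positive semidefinite operators are positive semidefinite, arguing by induction on the number of steps. The added care about invoking the commutation hypothesis between \emph{all} pairs, not just consecutive factors, is exactly what the paper's hypothesis provides, so no gap remains.
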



\begin{observation}\label{obs2}
Two sets of Kraus operators $\{T_x\}$ and $\{T_x'\}$ for the same channel ${\cal T}$,
give rise to two distinct measurement processes (POVMs $\{T^\dag_xT_x\}$ and $\{T_x'^\dag T_x'\}$ respectively)
if they cannot be obtained one from another as $\{T_x'\}=U\{T_x\}$, with a unitary $U$  that is diagonal or it is a permutation
(or combination of the two).
This introduces an equivalence relation between Kraus decompositions that will be useful in the following.
Specifically, we will not distinguish measurements (Kraus decompositions) by elements of $U(d)$,
but rather by elements of the quotient group $U(d)/ (U(1) \wr S_d)$.
Here $U(1) \wr S_d$ is the wreath product of the group $U(1)$ by the symmetric group $S_d$.
This is the generalized permutation group where the entries are elements of the unitary group $U(1)$.
Clearly $U(1) \wr S_d$ is a subgroup of $U(d)$.
\end{observation}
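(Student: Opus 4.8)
The natural route is to invoke first the standard \emph{unitary freedom in the operator-sum representation}: two families $\{T_x\}$ and $\{T_x'\}$ are Kraus decompositions of one and the same channel ${\cal T}$ if and only if, after padding the shorter family with zero operators so that both have the same length, there is a unitary matrix $U=(u_{xy})$ with $T_x'=\sum_y u_{xy}T_y$, and conversely every such $U$ produces a decomposition of ${\cal T}$. I would take this as given. It says precisely that the Kraus decompositions of ${\cal T}$ of a fixed length form a single orbit under the action of the unitary group, so that any equivalence we impose on decompositions is encoded by a subgroup of that group and the set of inequivalent measurements becomes the corresponding quotient.

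The forward half of the claim is then a one-line check. If $U\in U(1)\wr S_d$, that is $u_{xy}=e^{i\phi_x}\,\delta_{y,\pi(x)}$ for some phases $\phi_x$ and a permutation $\pi$, then $T_x'=e^{i\phi_x}T_{\pi(x)}$ and hence $T_x'^{\dag}T_x'=T_{\pi(x)}^{\dag}T_{\pi(x)}$: the phases cancel in $T^{\dag}T$ and the permutation merely relabels the outcomes, so $\{T_x'^{\dag}T_x'\}=\{T_x^{\dag}T_x\}$ as collections of POVM effects. Thus decompositions lying in a common $U(1)\wr S_d$-orbit implement the very same measurement, the relation of differing by an element of $U(1)\wr S_d$ is an equivalence relation, and it is legitimate to label inequivalent measurements by the quotient $U(d)/(U(1)\wr S_d)$.

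The substantive half is the converse --- that decompositions in distinct orbits yield genuinely distinct POVMs --- and this is where I expect the real work to lie. The plan is to pass to polar decompositions: equality of the two POVMs means, after absorbing a permutation of outcomes into $U$, that $|T_x|=|T_x'|$ for every $x$; writing $T_x=V_x|T_x|$ and $T_x'=V_x'|T_x|$ with partial isometries and inserting these into $T_x'=\sum_y u_{xy}T_y$ gives $V_x'|T_x|=\sum_y u_{xy}V_y|T_y|$, from which one wants to conclude that $u_{xy}$ can be nonzero only when $|T_x|=|T_y|$ and that within each such block $U$ acts purely by phases, i.e.\ that $U\in U(1)\wr S_d$. \emph{This forcing step is the main obstacle.} It is in fact not true without qualification: if several $|T_x|$ coincide one can construct decompositions having the same POVM yet lying in different orbits --- the completely dephasing qubit channel, all of whose effects equal $\tfrac12 I$, already provides such an example --- so I would state the converse under the non-degeneracy hypothesis that the positive parts $|T_x|$ are pairwise distinct, which is the regime relevant to the optimisations carried out later, and make this caveat explicit rather than assert the equivalence unconditionally.

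Finally, the auxiliary assertion that $U(1)\wr S_d$ is a subgroup of $U(d)$ is routine and I would dispatch it directly: a generalized permutation matrix has exactly one nonzero entry, of modulus one, in each row and column, so its columns form an orthonormal basis and it is unitary; and a product or an inverse of two such matrices is again of this shape, since composing a permutation of coordinates with a rephasing of coordinates again permutes-and-rephases coordinates. Hence $U(1)\wr S_d\le U(d)$ and the quotient appearing in the statement is well defined.
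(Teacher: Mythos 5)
The paper offers no proof of this Observation: it is stated as a working convention, and the only content it actually uses later is the easy direction, namely that Kraus families related by a generalized permutation matrix give the same POVM and the same moduli $|T_x|$ up to relabeling (hence the same fidelities $F_n$, $F'_n$), so that the search over Kraus decompositions can be carried out over unitaries modulo $U(1)\wr S_d$. Your argument for that direction, together with the appeal to the unitary freedom of operator-sum representations and the routine check that $U(1)\wr S_d$ is a subgroup of $U(d)$, supplies exactly the justification the paper leaves implicit, so on the part that matters you are aligned with (and more explicit than) the paper.

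Your scrutiny of the converse is a genuine and correct point rather than a defect of your proposal: read literally, the Observation asserts that decompositions in different $U(1)\wr S_d$-orbits always give distinct POVMs, and that is false in degenerate situations. Your counterexample idea is sound, with one slip of phrasing: for the completely dephasing qubit channel it is the rotated decomposition $\left\{ I/\sqrt{2},\, \sigma_z/\sqrt{2} \right\}$, not the canonical one, whose effects both equal $\tfrac12 I$; applying the unitary $\frac{1}{\sqrt2}\bigl(\begin{smallmatrix}1 & i\\ i & 1\end{smallmatrix}\bigr)$ to it yields $\left\{ \tfrac12 (I+i\sigma_z),\, \tfrac12 (iI+\sigma_z) \right\}$, a decomposition of the same channel with the same POVM $\left\{ \tfrac12 I, \tfrac12 I \right\}$, yet not obtainable from the first by any diagonal-times-permutation unitary. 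Since the paper only uses orbit membership as a sufficient condition for ``same measurement'' in order to shrink the set of unitaries to optimize over (e.g.\ in the proof of Proposition \ref{T2}), your proposed non-degeneracy caveat (pairwise distinct $|T_x|$) refines the literal statement without affecting anything downstream in the paper.
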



\section{Qubit channels}

Let us consider a qubit channel  ${\cal T}:{\mathfrak{S}}(\mathbb{C}^2)\to {\mathfrak{S}}(\mathbb{C}^2)$.
Any element of the set $\mathfrak{S}(\mathbb{C}^2)$ can be written as $\rho=\frac{1}{2}\left(I + \boldsymbol{r} \cdot \boldsymbol{\sigma}\right)$ with $\boldsymbol{r}\in\mathbb{R}^3$,
$\|\boldsymbol{r}\|\leq 1$,
and $\boldsymbol{\sigma}$ the vector of Pauli operators.
The channel ${\cal T}$ can always be represented as an affine transformation
in $\mathbb{R}^3$,
\begin{equation}
\boldsymbol{r}\stackrel{{\cal T}}{\longrightarrow}
\boldsymbol{T} \, \boldsymbol{r} + \boldsymbol{t},
\end{equation}
where $\boldsymbol{T}$ is a $3\times 3$ real matrix and $ \boldsymbol{t}=(t_1,t_2,t_3)^\top\in \mathbb{R}^3$.

By means of orthogonal transformations, corresponding to unitary pre- and post-processing to the channel $\cal T$,
the matrix $\boldsymbol{T}$ can be brought to diagonal form. Then, let us define
\begin{equation}
\mathfrak{T}:=\left\{ {\cal T}:{\mathfrak{S}}(\mathbb{C}^2)\to {\mathfrak{S}}(\mathbb{C}^2)\,  | \,
{\cal T} \; \text{is linear and CPTP and} \; {\cal T} \leftrightarrow ( {\boldsymbol{T}}=diag, {\boldsymbol{t}}) \right\}.
\end{equation}

When $\cal T$ belongs to the closure of extreme points of $\mathfrak{T}$, it can be parametrized by
(see e.g. \cite{OMR23})
\begin{equation}\label{Mt}
\boldsymbol{T}=\begin{pmatrix}
\cos(\phi-\theta) & 0 & 0 \\ 0 & \cos(\phi+\theta) & 0 \\ 0 & 0 & \cos(\phi-\theta)\cos(\phi+\theta)
\end{pmatrix},\quad
\boldsymbol{t}=\begin{pmatrix}
0 \\ 0 \\ \sin(\phi-\theta)\sin(\phi+\theta)
\end{pmatrix},
\end{equation}
with $\theta,\phi\in[0,\pi]$.
In term of Kraus operators this reads
\begin{equation}
{\cal T}(\rho)=\sum_{x=0}^1 T_x\rho T_x^\dag,
\end{equation}
with $T_0,T_1$ expressed in the canonical basis as
\begin{equation}\label{kmatrix}
T_0=\begin{pmatrix}
\cos\theta && 0\\
0 && \cos\phi
\end{pmatrix}, \quad
T_1=\begin{pmatrix}
0 && \sin{ \phi} \\
\sin{\theta}  && 0
\end{pmatrix}.
\end{equation}

Any other quantum channel ${\cal T}$ that is not on the closure of extreme points of $\mathfrak{T}$,
can be written as convex combination
of two maps in the closure of extreme points \cite{OMR23}, namely as
\begin{equation}\label{Tgen}
{\cal T}=\lambda {\cal T}_{\theta,\phi}+(1-\lambda) {\cal T}_{\theta',\phi'}, \qquad 0<\lambda <1,
\end{equation}
where ${\cal T}_{\theta,\phi}$ and ${\cal T}_{\theta',\phi'}$ are in the closure of the extreme points of $\mathfrak{T}$.
Then,

\begin{equation}\label{Tconvex}
{\cal T}(\rho)=\lambda \left(T_0\rho T_0^\dag+T_1\rho T_1^\dag\right)
+(1-\lambda) \left(T'_0\rho {T'_0}^\dag+T'_1\rho {T'_1}^\dag\right).
\end{equation}
with Kraus operators
\begin{equation}\label{Kconvex}
\sqrt{\lambda} T_0, \quad
\sqrt{\lambda} T_1, \quad
\sqrt{1-\lambda} T'_0, \quad
\sqrt{1-\lambda} T'_1,
\end{equation}


\begin{proposition}\label{T2}
For any qubit channel ${\cal T}$ of Kraus rank 2, the optimal measurement on single feedback step is given by
the Kraus operators \eqref{kmatrix}.
\end{proposition}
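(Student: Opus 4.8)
The plan is to maximize the single-step entanglement fidelity $F({\cal T}_{\rm corr})=\frac{1}{d^2}\sum_x(\tr|T_x|)^2$, with $d=2$, over all Kraus decompositions of a fixed Kraus-rank-2 qubit channel ${\cal T}$, and to show the maximum is attained at \eqref{kmatrix}. First I would fix one reference Kraus decomposition $\{A_x\}_{x=1}^{2}$ of ${\cal T}$; by the unitary-freedom theorem, every other decomposition with the same number ($=2$) of operators has the form $T_x=\sum_y u_{xy}A_y$ for some $2\times 2$ unitary $U=(u_{xy})$, and by Observation~\ref{obs2} we may regard $U$ as an element of $U(2)/(U(1)\wr S_2)$, i.e.\ effectively a one-parameter family (a single ``mixing angle'' of the two Kraus operators) up to irrelevant phases and swaps. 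So the optimization reduces to a one-parameter problem: maximize $g(U):=\sum_{x}(\tr|T_x(U)|)^2$ over this one-parameter family.

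The key analytic step is to understand $\tr|T_x|=\tr\sqrt{T_x^\dagger T_x}$ as a function of $U$. Writing $T_x^\dagger T_x$ as a $2\times 2$ positive matrix, I would use the elementary identity valid in dimension two, $\tr\sqrt{M}=\sqrt{\tr M+2\sqrt{\det M}}$ for $M\succeq 0$. Since $\sum_x T_x^\dagger T_x=I$ is decomposition-independent, $\sum_x\tr(T_x^\dagger T_x)=2$ is constant; hence $g(U)=2+2\sum_x\sqrt{\det(T_x^\dagger T_x)}=2+2\sum_x|\det T_x|$. Thus maximizing $F$ is equivalent to maximizing $\sum_x|\det T_x|$ over the decompositions. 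Now $\det T_x=\det\big(\sum_y u_{xy}A_y\big)$ is a quadratic form in the row $(u_{x1},u_{x2})$ of $U$; concretely $\det T_x = u_{x1}^2\det A_1 + u_{x1}u_{x2}(\ldots) + u_{x2}^2\det A_2$, a scalar. So I must maximize $|q(u_{1\cdot})|+|q(u_{2\cdot})|$ where $q$ is a fixed complex quadratic form evaluated on the two orthonormal rows of a $2\times 2$ unitary. Parametrizing the rows as $(\cos\alpha, e^{i\beta}\sin\alpha)$ and its orthocomplement, this is a concrete trigonometric optimization; its maximizer, traced back through the unitary freedom, should be exactly the Kraus pair \eqref{kmatrix}.

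The main obstacle I anticipate is twofold. First, one must verify that for a \emph{general} rank-2 qubit channel (not only the extreme ones) the reference decomposition can be rotated into the canonical form \eqref{kmatrix}: this is where the structural results quoted from \cite{OMR23} — that extreme-point channels in $\mathfrak{T}$ have Kraus operators of the form \eqref{kmatrix}, and that the diagonalization of $\boldsymbol T$ corresponds to unitary pre/post-processing — must be combined with the fact that pre/post unitaries act on $|T_x|$ by conjugation and hence do not change $\tr|T_x|$. Second, and more delicate, is ruling out that a decomposition with \emph{more} than two nonzero Kraus operators could do better; but since ${\cal T}$ has Kraus rank exactly $2$, any decomposition's Kraus operators all lie in the $2$-dimensional operator space spanned by $\{A_1,A_2\}$, and a decomposition into $k>2$ operators $T_x=\sum_{y=1}^2 u_{xy}A_y$ comes from an isometry $U\in\mathbb{C}^{k\times 2}$; one then checks, using $\sum_x|\det T_x|$ and the fact that $\det$ is a quadratic form on a $2$-dimensional space (so $|\det T_x|\le \|$coefficient vector$\|^2\cdot C$ with $\sum_x\|\cdot\|^2=2$), that coarsening to a rank-$2$ (two-outcome) decomposition can only increase the objective — i.e.\ a genuine $2\times 2$ unitary is optimal. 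Assembling these pieces gives that \eqref{kmatrix} is the optimal single-step measurement.
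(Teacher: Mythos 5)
Your reduction is, in substance, the paper's own argument: the $2\times2$ identity $(\tr|M|)^2=\tr(M^\dagger M)+2|\det M|$ together with $\sum_x T_x^\dagger T_x=I$ turns the maximization of $F_1$ into the maximization of $\sum_x|\det T_x|$, and the paper's proof does the same thing in disguise --- it plugs $\widetilde T_i=\sum_j U_{ij}T_j$ with the rotation \eqref{U22} into $F_1$ and obtains \eqref{FAB}, which is precisely $\tfrac12\bigl(1+\sum_x|\det\widetilde T_x|\bigr)$, then checks that the maximum over $\alpha$ sits at $\alpha=0,\pi/2,\pi$. The genuine gap in your proposal is that you stop just before this decisive step: asserting that the maximizer ``should be exactly the Kraus pair \eqref{kmatrix}'' is the entire content of the proposition, and you never verify it. The verification is short once you notice that for the canonical pair the cross term of your quadratic form vanishes, since $\det(T_0+T_1)=\cos(\theta+\phi)=\det T_0+\det T_1$; hence for any row $(u_{x1},u_{x2})$ of the mixing matrix one has $\det T_x=u_{x1}^2\det T_0+u_{x2}^2\det T_1$, and therefore $\sum_x|\det T_x|\le\sum_x\bigl(|u_{x1}|^2|\det T_0|+|u_{x2}|^2|\det T_1|\bigr)=|\det T_0|+|\det T_1|$, using only the normalization of the columns of $U$. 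This bound is saturated by \eqref{kmatrix} itself, which is what had to be shown.

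Two further remarks. Your worry about general rank-2 channels is resolved exactly as you suggest and as the paper's setup already provides: every Kraus-rank-2 qubit channel lies in the closure of the extreme points of $\mathfrak{T}$ once $\boldsymbol{T}$ is diagonalized by pre/post unitaries \cite{OMR23}, and since $T_x\mapsto WT_xV$ gives $|WT_xV|=V^\dagger|T_x|V$, the fidelity \eqref{fid} is unchanged, so starting from \eqref{kmatrix} is legitimate. Concerning decompositions with more than two Kraus operators: your sketched estimate $|\det T_x|\le C\|u_x\|^2$ with $\sum_x\|u_x\|^2=2$ only yields $2\max(|\det T_0|,|\det T_1|)$, which is weaker than the target $|\det T_0|+|\det T_1|$ and does not by itself show that coarsening to two outcomes helps; however, the triangle-inequality bound above holds verbatim for the rows of any $k\times2$ isometry, so it closes this case as well --- a point worth making explicit, since the paper's proof does not address it at all (it only considers $2\times2$ unitaries, invoking Observation \ref{obs2}).
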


\begin{proof}
We can construct the optimal Kraus operators starting from those of Eq.\eqref{kmatrix} as
\begin{equation}\label{KrausTt}
\widetilde{T_i}=\sum_{j=0}^1 U_{ij} T_j,
\end{equation}
by using the unitary $2\times 2$ matrix
\begin{equation}\label{U22}
U=\begin{pmatrix}
\cos\alpha & \sin\alpha \\
-\sin\alpha & \cos\alpha
\end{pmatrix}.
\end{equation}
Note that we do not need to consider a more general unitary as consequence of Observation \ref{obs2}.

Eq.\eqref{KrausTt} implies that{
\begin{align}\label{FAB}
F_1=\frac{1}{2} (1+\left|\cos (\theta- \phi) \sin^2\alpha-\sin \theta  \sin \phi\right| +\left| \cos (\theta- \phi)\cos^2\alpha -\sin \theta  \sin \phi\right| ),
\end{align}}

It is easy to see that the maximum of \eqref{FAB} over $\alpha$ occurs for $\alpha=0,\pi/2,\pi$, i.e. for the Kraus decomposition with $T_0,T_1$.
\end{proof}

\medskip

When the qubit channel has Kraus rank $>2$, we cannot derive a general expression for the optimal measurement. However, we can state the following result for any qubit channel.

\begin{proposition}\label{persi}
For any qubit channel ${\cal T}$ the optimal measurement on the environment when adopting the Bayesian strategy is the same at each feedback step.
\end{proposition}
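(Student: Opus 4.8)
The plan is to show that once we fix the optimal Kraus decomposition $\{\widetilde T^{(1)}_{x_1}\}$ for the first step, the quantity $F'_n$ is maximized over the remaining decompositions $\{T^{(k)}_{x_k}\}_{k\geq 2}$ by taking the \emph{same} decomposition at every later step, namely $\widetilde T^{(k)}_{x_k}=\widetilde T^{(1)}_{x_k}$ for all $k$. Concretely, one works at the level of a single summand: after the first step the relevant positive operator is $A_1:=\bigl|\widetilde T^{(1)}_{x_1}\bigr|$, and the Bayesian recursion builds $A_k:=\bigl|\,T^{(k)}_{x_k}A_{k-1}\,\bigr|$, so that $F'_n=\frac{1}{d^2}\sum_{x_1,\dots,x_n}\bigl(\operatorname{tr}A_n\bigr)^2$. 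The claim is that the decomposition at step $k$ that maximizes $\sum_{x_k}(\operatorname{tr}A_k)^2$ (for each fixed incoming $A_{k-1}$) does not depend on $A_{k-1}$ and coincides with the step-$1$ optimizer. The natural strategy is induction on $n$, peeling off the \emph{last} step: given that steps $1,\dots,n-1$ have been fixed optimally, one optimizes only over $\{T^{(n)}_{x_n}\}$, and one must argue this optimization problem is, up to a relabeling, identical to the single-step problem solved implicitly in the definition of $\widetilde T^{(1)}$.

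First I would make precise why the step-$n$ optimization reduces to the step-$1$ optimization. For a positive operator $B\geq 0$ and any Kraus decomposition $\{T_x\}$ of $\cal T$, define $G(B,\{T_x\}):=\sum_x\bigl(\operatorname{tr}\bigl|T_xB\bigr|\bigr)^2$. The single-step figure of merit of Proposition~\ref{T2} is $G(\tfrac{1}{d}I,\{T_x\})$ up to normalization (taking $B$ proportional to the identity recovers \eqref{fid}), and $\widetilde T^{(1)}$ is by definition the maximizer of $G(I,\cdot)$. I would then show the key invariance: for qubits, a positive $B=\operatorname{diag}(b_0,b_1)$ after the preceding diagonalizing unitaries, and the $2\times2$ unitary freedom \eqref{U22}, one computes $\operatorname{tr}\bigl|\widetilde T_x B\bigr|$ and checks that $G(B,\cdot)$ is maximized at the \emph{same} $\alpha\in\{0,\pi/2,\pi\}$ regardless of $(b_0,b_1)$ — exactly the computation \eqref{FAB}, but with the constant $\sin\theta\sin\phi$ replaced by weighted versions and the overall expression still of the form $\text{const}+|c\sin^2\alpha-c'|+|c\cos^2\alpha-c'|$, whose maximum over $\alpha$ is always at the endpoints by convexity of $|{\cdot}|$. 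This is the analogue of Proposition~\ref{T2} with a general positive "input" rather than the maximally mixed input, and it is where the qubit structure (two Kraus operators, one-parameter unitary family, explicit diagonal form \eqref{kmatrix}) is essential.

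The second ingredient handles Kraus rank $>2$: by \eqref{Tgen}–\eqref{Kconvex} a general qubit channel is a convex mixture of two extreme channels with Kraus operators \eqref{Kconvex}, and the Bayesian fidelity is a sum over the four branches; I would argue that the optimal local unitary acts block-diagonally on the two $\sqrt\lambda T_j$ and the two $\sqrt{1-\lambda}T'_j$ pairs (mixing across blocks cannot help, since it only mixes operators attached to different extreme channels and, by a convexity/triangle-inequality estimate on the traces of absolute values, does not increase each $(\operatorname{tr}|\cdot|)^2$), reducing to two independent rank-$2$ problems each governed by the computation above. Finally, induction closes the loop: at each step the local optimization has the same optimizer independent of the incoming operator $B=A_{k-1}$, hence the optimal Bayesian decomposition is step-independent. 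The main obstacle I anticipate is the rank-$>2$ case — establishing rigorously that the optimal unitary respects the block structure and that no "entangling" of the two extreme-channel blocks can raise the fidelity; if the clean block-diagonal argument fails, a fallback is a direct but heavier optimization over $U(4)/(U(1)\wr S_4)$ using the parametrization of \eqref{Kconvex}, exploiting that $\operatorname{tr}|T_x B|$ depends only on the singular values of $T_xB$ and that $B$ is diagonal in the fixed basis.
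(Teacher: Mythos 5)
There is a genuine gap, and it sits exactly where you flag it: the Kraus rank $>2$ case. Your plan there is to argue that the optimal unitary acts block-diagonally on the pair $\{\sqrt{\lambda}T_0,\sqrt{\lambda}T_1\}$ and the pair $\{\sqrt{1-\lambda}T'_0,\sqrt{1-\lambda}T'_1\}$, so that "mixing across blocks cannot help." This is false, and the paper itself contains a counterexample: for the rank-3 channel with Kraus operators \eqref{Kex} (a convex mixture of an extreme channel and the identity), the optimal single-step measurement is obtained by a unitary with $\theta_{13}\neq 0$, i.e.\ it mixes $T_1$ (from one extreme-channel block) with $T_3$ (from the other), raising $F_1$ from $\tfrac12+\tfrac{1-\lambda}{2}$ to $\tfrac12+\tfrac{\sqrt{1-\lambda}}{2}$. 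So the block-diagonal reduction cannot be the basis of the proof, and the fallback of a brute-force optimization over $U(4)/(U(1)\wr S_4)$ is not an argument. Since the proposition claims step-independence for \emph{any} qubit channel, this is not a peripheral case but the heart of the statement; your proposal as it stands only (possibly) covers rank 2.

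The missing idea, which also makes your first ingredient cleaner, is the $2\times 2$ identity $(\operatorname{tr}M)^2=\operatorname{tr}M^2+2\det M$ combined with the normalization $\sum_x |T_x C|^2 = C^\dagger\bigl(\sum_x T_x^\dagger T_x\bigr)C=|C|^2$, which holds for \emph{every} Kraus decomposition. Summing over all outcomes, the $\operatorname{tr}M^2$ contribution to the fidelity is decomposition-independent, and the determinant contribution factorizes, $\det|T_{x_k}\,|\cdots||=\det|T_{x_k}|\cdot\det|\cdots|$, so at every step the Bayesian optimization reduces to maximizing $\sum_{x}\det|T_{x}|$ over Kraus decompositions of $\cal T$ --- a criterion that does not depend on the step, on the accumulated operator, or on the Kraus rank. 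This is how the paper proves the proposition. Incidentally, the same identity shows that your "pointwise" invariance claim (for fixed incoming positive $B$, the optimizer of $\sum_x(\operatorname{tr}|T_xB|)^2$ is independent of $B$) is in fact true, since $\sum_x(\operatorname{tr}|T_xB|)^2=\operatorname{tr}|B|^2+2\det|B|\sum_x\det|T_x|$; but your route to it via the explicit parametrization \eqref{kmatrix}--\eqref{U22} is both shakier (the accumulated operator $A_{k-1}$ is not diagonal in the canonical basis, and you cannot rotate it away without rotating the fixed Kraus operators) and intrinsically limited to rank 2.
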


\begin{proof}
Let $\{\tilde{T}_{x}\}$ be the optimal measurement at the $k$-step of the Bayesian strategy.
We will show that any measurement $\{T_x\}$ at the $k+1$-th step cannot give a higher fidelity.

By hypothesis, it is
\begin{equation}
\sum_{x_1,\ldots,x_k}\left({\rm{tr}} |{\tilde{T}_{x_k}}|  \tilde{A}_{x_{k-1},\ldots,x_1}... ||\right)^2
\geq\sum_{x_1,\ldots,x_k}\left({\rm{tr}}|{{T}_{x_k}}|  \tilde{A}_{x_{k-1},\ldots,x_1}...||\right)^2,
\end{equation}
where $\tilde{A}_{x_{k-1},\ldots,x_1}$
stands for the combination of Kraus operators of previous $k-1$ steps.

The following relation
\begin{equation}
  ({\rm tr}M)^2={\rm tr}M^2+2\det M,
\end{equation}
valid for any $2\times 2$ matrix $M$, yields
\begin{align}
&\sum_{x_1,\ldots,x_k}\left({\rm{tr}} |{\tilde{T}_{x_k}}| \, \tilde{A}_{x_{k-1},\ldots,x_1}...||\right)^2
\geq\sum_{x_1,\ldots,x_k}\left({\rm{tr}}|{{T}_{x_k}}| \, \tilde{A}_{x_{k-1},\ldots,x_1}...||\right)^2 \label{1line}\\
\Longleftrightarrow
& \sum_{x_1,\ldots,x_k} {\rm{tr}} \left(|{\tilde{T}_{x_k}}| \, \tilde{A}_{x_{k-1},\ldots,x_1}...||\right)^2
+2  \sum_{x_1,\ldots,x_k} \det |{\tilde{T}_{x_k}}| \det |\tilde{A}_{x_{k-1},,\ldots,x_1}| \notag\\
&\geq\sum_{x_1,\ldots,x_k}{\rm{tr}}\left(|{{T}_{x_k}}| \, \tilde{A}_{x_{k-1},\ldots,x_1}...||\right)^2
+2  \sum_{x_1,\ldots,x_k} \det |T_{x_k}| \det |\tilde{A}_{x_{k-1},,\ldots,x_1}|\label{2line}\\
\Longleftrightarrow
&  \sum_{x_k} \det |{\tilde{T}_{x_k}}|
\geq  \sum_{x_k} \det |T_{x_k}|.\label{3line}
\end{align}
From \eqref{1line} to \eqref{2line} we took into account that $\det|M_1|M_2||=\det|M_1|\det|M_2|$.
From  \eqref{2line} to \eqref{3line} we used the fact that
$\sum_{x_1,\ldots,x_k} \left(|{\tilde{T}_{x_k}}| \, \tilde{A}_{x_{k-1},\ldots,x_1}...||\right)^2=
\sum_{x_1,\ldots,x_k} \left(|{{T}_{x_k}}| \, \tilde{A}_{x_{k-1},\ldots,x_1}...||\right)^2=I$.

Now, at the $k+1$-th step, we have
\begin{align}
\sum_{x_1,\ldots,x_k,x_{k+1}}\left({\rm{tr}} |T_{x_{k+1}} \, |{\tilde{T}_{x_k}}|  \tilde{A}_{x_{k-1},\ldots,x_1}...|||\right)^2&=
\sum_{x_1,\ldots,x_k,x_{k+1}}{\rm{tr}} \left(|T_{x_{k+1}} \, |{\tilde{T}_{x_k}}|  \tilde{A}_{x_{k-1},\ldots,x_1}...|||\right)^2 \notag\\
&+2  \sum_{x_1,\ldots,x_k,x_{k+1}} \det  |T_{x_{k+1}}| \det |{\tilde{T}_{x_k}}| \tilde{A}_{x_{k-1},,\ldots,x_1}...|| \\
&\leq \sum_{x_1,\ldots,x_k,x_{k+1}}{\rm{tr}} \left(|\tilde{T}_{x_{k+1}} \, |{\tilde{T}_{x_k}}| \tilde{A}_{x_{k-1},\ldots,x_1}...|||\right)^2 \notag\\
&+2  \sum_{x_1,\ldots,x_k,x_{k+1}} \det  |\tilde{T}_{x_{k+1}}| \det |{\tilde{T}_{x_k}}| \tilde{A}_{x_{k-1},,\ldots,x_1}...|| \\
&=\sum_{x_1,\ldots,x_k,x_{k+1}}\left({\rm{tr}} |\tilde{T}_{x_{k+1}} |{\tilde{T}_{x_k}}| \tilde{A}_{x_{k-1},\ldots,x_1}...|||\right)^2.
\end{align}
The inequality is obtained using the fact that
$\sum_{x_1,\ldots,x_k,x_{k+1}} \left(|{T}_{x_{k+1}}|{\tilde{T}_{x_k}}| \, \tilde{A}_{x_{k-1},\ldots,x_1}...|||\right)^2=$ \newline
$\sum_{x_1,\ldots,x_k,x_{k+1}} \left(|\tilde{T}_{x_{k+1}}|...{\tilde{T}_{x_k}}| \, \tilde{A}_{x_{k-1},\ldots,x_1}...||\right)^2=I$,
together with \eqref{3line}. 
\end{proof}


\begin{remark}
As consequence of Proposition \ref{persi} we can deduce the inutility of Bayesian feedback (with respect to Markovian feedback) for qubit channel of Kraus rank 2. This is because the optimal measurement, according to Proposition \ref{T2}, gives Kraus operators with commuting absolute values (see Observation \ref{obs1}).
\end{remark}


\begin{remark}
Proposition \ref{persi} states that for Bayesian feedback it is enough to find the optimal measurement at the first step.
In general this is not the case for Markovian feedback.\footnote{Essentially because
$\sum_{x_1,\ldots,x_k} \left(|{{T}_{x_k}}| \ldots |{T}_{x_{1}}|\right)^2\neq I$.}
Here, if the optimal set $\{\tilde{T}_{x_1}\}$ is unique (in the sense of Observation \ref{obs2}),
then it keeps going on at each step (as in the case of channel with Kraus rank equal to 2).
However, when the Kraus rank of the channel is greater than 2, it might happen that such a set is not unique, or in other words there might be several unitaries that applied to the Kraus decomposition \eqref{Kconvex} give the same value of $F_1$. In turn, these unitaries can give different values of $F_2$ (and consequently $F_n$), while keeping the same value of $F_2'$. Thus, for the Markovian strategy, one needs to identify among them the unitary that maximize $F_2$.
\end{remark}

\begin{example}
A simple channel of Kraus rank equal to 3 can be build by considering $\theta=\theta'=\phi'=0$ and $\phi=\frac{\pi}{2}$ in Eq.\eqref{Kconvex}. Its Kraus operators read
\begin{equation}\label{Kex}
T_1=\sqrt{\lambda} \begin{pmatrix}
1 & 0\\
0 & 0
\end{pmatrix},
\quad
T_2=\sqrt{\lambda} \begin{pmatrix}
0 & 1\\
0 & 0
\end{pmatrix},
\quad
T_3=\sqrt{1-\lambda} \begin{pmatrix}
1 & 0\\
0 & 1
\end{pmatrix}.
\end{equation}
Note that $F_1$ computed with \eqref{Kex} results
\begin{equation}
F_1=\frac{1}{2}+\frac{1-\lambda}{2}.
\end{equation}
According to \cite{R97} a general $3\times 3$ unitary can be written as\footnote{We get rid of diagonal pre and post matrices containing uni-modular factors because irrelevant for our purposes.}
\begin{equation}
U=\begin{pmatrix}
1 & & \\
 & c_{23} & s_{23} \\
 & -s_{23} & c_{23}
\end{pmatrix}
\begin{pmatrix}
e^{-i\delta} & & \\
& 1 & \\
& & 1
\end{pmatrix}
\begin{pmatrix}
 c_{13} & & s_{13} \\
  & 1 & \\
  -s_{13} & & c_{13}
\end{pmatrix}
\begin{pmatrix}
e^{i\delta} & & \\
& 1 & \\
& & 1
\end{pmatrix}
\begin{pmatrix}
 c_{12} & s_{12} & \\
 -s_{12} & c_{12} & \\
& & 1 \\
\end{pmatrix},
\end{equation}
where $c_{ij},s_{ij}$ stand for $\cos\theta_{ij}$ and $\sin\theta_{ij}$ respectively.
Now, considering that the linear combination of $T_1$ and $T_2$ does not improve the fidelity (as a consequence of  Proposition \ref{T2}), we discard the last term (and hence the pre-last by Observation \ref{obs2}). We end up considering a unitary
\begin{equation}
U=\begin{pmatrix}
1 & & \\
 & c_{23} & s_{23} \\
 & -s_{23} & c_{23}
\end{pmatrix}
\begin{pmatrix}
e^{-i\delta} & & \\
& 1 & \\
& & 1
\end{pmatrix}
\begin{pmatrix}
 c_{13} & & s_{13} \\
  & 1 & \\
  -s_{13} & & c_{13}
\end{pmatrix}
=\begin{pmatrix}
e^{-i\delta} c_{13} & 0 & e^{-i\delta}s_{13} \\
-s_{13}s_{23}  & c_{23} & c_{13}s_{23} \\
-c_{23}s_{13} & -s_{23}& c_{13}c_{23}
\end{pmatrix}.
\end{equation}
At this point we note that also the parameter $e^{-i\delta}$ can be neglected,
because it factors out in the first Kraus operator.
Thus we have
\begin{align}
\widetilde{T}_1&=c_{13}T_1+s_{13}T_3,\\
\widetilde{T}_2&=-s_{13}s_{23}T_1+c_{23}T_2+c_{13}s_{23}T_3,\\
\widetilde{T}_3&=-s_{13}c_{23}T_1-s_{23}T_2+c_{13}c_{23}T_3. 
\end{align}
From these Kraus operators we obtain the maximum single step fidelity
\begin{equation}\label{F1max3K}
F_1=\frac{1}{2}+\frac{\sqrt{1-\lambda}}{2},
\end{equation}
when
\begin{equation}
\theta_{13}=k\frac{\pi}{2}-\frac{1}{2}\arcsin\sqrt{\lambda},
\; k\in\mathbb{Z},
\qquad
\theta_{23}=\textrm{any value}.
\end{equation}
Thus, considering that $\theta_{23}$ can take any value, there will be infinitely many Kraus decompositions
giving the fidelity \eqref{F1max3K} (even accounting for the Observation \ref{obs2}).
Moving on to the second feedback step, all of them give
\begin{equation}
F'_2=\frac{1}{2}+\frac{\left(\sqrt{1-\lambda}\right)^2}{2},
\end{equation}
but it results
\begin{equation}
F_2=\frac{1}{2}+\frac{\left(\sqrt{1-\lambda}\right)^2}{2},
\end{equation}
only when $\theta_{23}$ equals integer multiples of $\frac{\pi}{2}$ (while for other values of $\theta_{23}$ one gets a smaller value of $F_2$).
Generalizing, under the same conditions, we get
\begin{equation}
F_n=F_n'=\frac{1}{2}+\frac{\left(\sqrt{1-\lambda}\right)^n}{2}.
\end{equation}
\end{example}

\medskip

We have numerically verified that other rank 3  qubit channels (coming from covex combination of \eqref{kmatrix} and identity) and rank 4 qubit channels (as from Eq.\eqref{Kconvex})
allows for the same performance of Bayesian and Markovian feedback.

Actually we created a grid in the parameter space $\{\lambda, \theta, \phi, \theta', \phi'\}$ with step 0.05 for $\lambda$ and
$\frac{\pi}{50}$ for the angles. Then, on each point of this grid we evaluated
\begin{align}
F_2=\max_{\{U\}}
\frac{1}{4}\sum_{x_1,x_2} \left( {\rm tr}
\left|\widetilde{T}_{x_2}\right| \,
\left|\widetilde{T}_{x_1}\right|
\right)^2,
\qquad
\text{and}
\qquad
F_2'=\max_{\{U\}}
\frac{1}{4}\sum_{x_1,x_2} \left( {\rm tr}
\left|\widetilde{T}_{x_2} \,
\left|\widetilde{T}_{x_1}\right|\right|
\right)^2,
\end{align}
with
\begin{align}
\begin{pmatrix}
\tilde{T}_0\\
\tilde{T}_1\\
\tilde{T}'_0\\
\tilde{T}'_1
\end{pmatrix}=U \begin{pmatrix}
{T}_0\\
{T}_1\\
{T}'_0\\
{T}'_1
\end{pmatrix}.
\end{align}

At each step,
we have generated $10^5$ random unitary operators $U$ by sampling from $U(4)$ according to the Haar measure.
Specifically, this has been done by using the Ginibre ensemble \cite{Gin} consisting of matrices whose entries are independent and identically distributed standard normal complex random variables
and then ortho-normalizing such matrices \cite{Mez}.

Since already at the second step we have $F_2=F_2'$ and the solution $U$ achieving this is unique (in the sense of Observation \ref{obs2}), the result keeps going on for any step $n$.

Therefore, we are led to the following conjecture.


\begin{conjecture}
For any qubit channel it is $F_n'=F_n$, $\forall n$.
\end{conjecture}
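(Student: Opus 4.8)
The plan is to prove the equivalent inequality $F_n\le F_n'$ (the reverse being Observation \ref{obs}) by exhibiting, for every $n$, a Markovian strategy that attains the Bayesian value. First I would put both $F_n$ and $F_n'$ into closed form using the $2\times 2$ identity $(\tr M)^2=\tr M^2+2\det M$ together with a telescoping argument. On the Bayesian side, Proposition \ref{persi} guarantees that one fixed optimal decomposition $\{\widetilde T_x\}$ is used at every step; writing $B_0:=I$ and $B_k:=|\widetilde T_{x_k}B_{k-1}|$, one has $B_k\ge 0$, hence $B_k^2=B_{k-1}\widetilde T_{x_k}^\dag\widetilde T_{x_k}B_{k-1}$, so using $\sum_{x_k}\widetilde T_{x_k}^\dag\widetilde T_{x_k}=I$ one gets the telescoping identity $\sum_{x_1,\dots,x_n}B_n^2=I$, while $\det B_n=\prod_{k=1}^n|\det\widetilde T_{x_k}|$. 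This yields $F_n'=\tfrac12+\tfrac12 D^n$, where $D:=\max_{\{K_x\}}\sum_x|\det K_x|$ over Kraus decompositions of ${\cal T}$; note that $F_1=\tfrac12+\tfrac12\sum_x|\det K_x|$, so $D$ is exactly the quantity that the optimality condition \eqref{3line} selects. On the Markovian side, if one uses a single decomposition $\{K_x\}$ at every step whose absolute values pairwise commute, then $A_{x_1,\dots,x_n}:=|K_{x_n}|\cdots|K_{x_1}|$ is Hermitian and positive, the same telescoping gives $\sum_{x_1,\dots,x_n}A_{x_1,\dots,x_n}^2=I$ and $\det A_{x_1,\dots,x_n}=\prod_{k=1}^n|\det K_{x_k}|$, so this Markovian strategy attains $\tfrac12+\tfrac12(\sum_x|\det K_x|)^n$.

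Consequently the conjecture reduces to a single statement: \emph{every qubit channel admits a Kraus decomposition $\{K_x\}$ attaining $D=\sum_x|\det K_x|$ whose absolute values $|K_x|$ pairwise commute}. Granting it, feeding such a decomposition into the Markovian formula above gives $F_n\ge\tfrac12+\tfrac12 D^n=F_n'$, and with Observation \ref{obs} we conclude $F_n=F_n'$ for all $n$ (equivalently, Observation \ref{obs1} applies to this decomposition, which is also Bayesian-optimal). To establish the reduced statement I would first use orthogonal pre- and post-processing to bring $\boldsymbol T$ to diagonal form — this conjugates every Kraus operator by fixed unitaries, which leaves $|\det K_x|$ unchanged and preserves commutativity of the $|K_x|$ — and then write ${\cal T}=\lambda{\cal T}_{\theta,\phi}+(1-\lambda){\cal T}_{\theta',\phi'}$ with canonical Kraus operators that are either diagonal (the $T_0,T_0'$ type) or antidiagonal (the $T_1,T_1'$ type). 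A general decomposition is $K_x=\sum_j V_{xj}T_j$ with $V^\dag V=I$; since for $2\times 2$ matrices $\det(\sum_j c_jT_j)=c^\top M c$ for the symmetric matrix $M_{jk}$ assembled from the traces and determinants of the $T_j$, maximizing the determinant sum becomes maximizing $\sum_x|w_x^\top M w_x|$ over frames $\{w_x\}$ with $\sum_x w_x w_x^\dag=I$, and the task is to show that this maximum is always realized by a frame whose $K_x$ share a common singular-vector basis.

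The main obstacle is precisely this last point. Proposition \ref{T2} settles Kraus rank $2$, where the canonical decomposition \eqref{kmatrix} already has diagonal, hence commuting, absolute values. For higher Kraus rank, however, the worked example shows the optimal decomposition is typically non-unique and only some optimizers are of the commuting type (there $\theta_{23}$ had to be a multiple of $\pi/2$), so one cannot simply appeal to optimality — one must single out a commuting representative among all optimizers. Carrying this out for Kraus ranks $3$ and $4$, i.e.\ analyzing the geometry of the non-convex maximization $\max_{\{w_x\}}\sum_x|w_x^\top M w_x|$ carefully enough to always produce a commuting maximizer, is the hard part and the reason the result is stated as a conjecture rather than a theorem.
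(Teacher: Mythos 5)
You have not proved the statement, and neither does the paper: it is stated as a conjecture, supported only by numerical evidence (a parameter grid plus Haar-random unitaries showing $F_2=F_2'$ for rank-3 and rank-4 qubit channels). So the right comparison is: your write-up is a correct and genuinely sharper reformulation than what the paper offers, but it stops exactly where a proof would have to start. The parts you do carry out are sound: the telescoping identity $\sum_{x_1,\ldots,x_n}B_n^2=I$ together with $(\tr M)^2=\tr M^2+2\det M$ and $\det B_n=\prod_k|\det \widetilde T_{x_k}|$ does give the closed form $F_n'=\tfrac12+\tfrac12 D^n$ with $D=\max\sum_x|\det K_x|$ (in fact your argument works for arbitrary per-step decompositions, so you do not even need Proposition \ref{persi} for this step, since $\sum_{x_k}T^{(k)\dag}_{x_k}T^{(k)}_{x_k}=I$ at every step); likewise, repeating a single decomposition with pairwise commuting absolute values makes the Markovian product positive and telescoping, yielding the lower bound $F_n\geq\tfrac12+\tfrac12\left(\sum_x|\det K_x|\right)^n$. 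Combined with Observation \ref{obs}, the conjecture is thus equivalent to (or at least implied by) the existence, for every qubit channel, of a $D$-maximizing Kraus decomposition whose absolute values commute. This is consistent with the paper's rank-2 result (Proposition \ref{T2}, where \eqref{kmatrix} is diagonal/antidiagonal with diagonal absolute values) and with the rank-3 example, where $F_n=F_n'=\tfrac12+\tfrac12(\sqrt{1-\lambda})^n$ and the commuting optimizers are exactly those with $\theta_{23}\in\tfrac{\pi}{2}\mathbb{Z}$.

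The genuine gap is the final existence claim, which you yourself flag: for Kraus rank 3 and 4 you must show that among the (generally non-unique) maximizers of $\sum_x|w_x^\top M w_x|$ over frames $\sum_x w_xw_x^\dagger=I$ there is always one whose Kraus operators share a common singular-vector basis. Nothing in your argument forces this: optimality of $D$ alone (the condition \eqref{3line}) is insensitive to the relative orientation of the polar parts, and the paper's example shows that most optimizers are \emph{not} of the commuting type, so the commuting representative has to be produced by a structural argument about the optimizer set (e.g.\ analyzing the quadratic form $M$ built from \eqref{Kconvex}), which you do not supply. Until that step is established, what you have is a reduction of the conjecture to a cleaner, finite-dimensional optimization statement --- a useful reformulation that goes beyond the paper's numerics, but not a proof of $F_n'=F_n$.
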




\section{Qutrit channels}

In this Section we consider two enlightening examples of channels
${\cal T}:{\mathfrak{S}}(\mathbb{C}^3)\to {\mathfrak{S}}(\mathbb{C}^3)$.
The first one (dephasing channel) for which Bayesian feedback does not offer any advantage over the Markovian feedback, and the second one (amplitude damping) which instead does.

\subsection{Dephasing channel}

According to Ref.\cite{RMM22}, the action of the dephasing channel on a qutrit, with dephasing parameter $\gamma\geq 0$,
should be written as:
\begin{equation}
{\cal D}_{\gamma}(\rho)=\sum_{j=0}^{\infty} D_j\rho D_j^\dagger,
\end{equation}
where the Kraus operators read
\begin{equation}
D_j=e^{-\frac{1}{2}\gamma H^2}\frac{(-i \sqrt{\gamma} H )^j}{\sqrt{j!}},
\end{equation}
with $H$ the Hamiltonian of the system. In the basis $\{ |0\rangle,|1\rangle, |2\rangle \}$ composed by the eigenvectors of $H$, we have
\begin{equation}\label{rhoFock}
\rho=\sum_{m,n=0}^{2} \rho_{m,n}\ket m \bra n \mapsto
{\cal D}_\gamma \left(\rho\right)=\sum_{m,n=0}^{2} e^{-\frac{1}{2}\gamma \left(m-n\right)^2} \rho_{m,n}\ket m \bra n.
\end{equation}

However it is known that this map can be obtained with a maximum of $9$ Kraus operators,
instead of infinitely many. Actually, it is possible to get it with just 3:
\begin{equation*}
\widetilde{D}_j=\sum_{k=0}^2 U_{jk}D_k,
\end{equation*}
through a suitable unitary $U$.
Explicitly, setting $q\equiv e^{-\gamma/2}$, we have
\begin{align}\label{Kdeph}
\widetilde{D}_0=\sqrt{\frac{1-q^4}{2}}
\begin{pmatrix}
-1 & 0 & 0 \\
0 & 0 & 0 \\
0 & 0 & 1
\end{pmatrix}, \qquad
\widetilde{D}_{\tiny\begin{matrix}
1 \\ 2 \end{matrix}}=\sqrt{\frac{2+q^4\pm \sqrt{8q^2+q^8}}{2(2+\alpha^2_\pm)}}
\begin{pmatrix}
1 & 0 & 0 \\
0 & \alpha_\pm & 0 \\
0 & 0 & 1
\end{pmatrix},
\end{align}
represented in the basis $\{ |0\rangle,|1\rangle, |2\rangle \}$ and having defined
\begin{align}
\alpha_\pm\equiv \frac{q^3\sqrt{8+q^6}\pm(4-q^6)}{\sqrt{8+q^6}\pm 3q^3}.
\end{align}
Being the Kraus operators \eqref{Kdeph} diagonal, also their absolute value will be diagonal and hence commuting.
Actually, any unitary transformation of the Kraus operators \eqref{Kdeph} will lead to diagonal operators.
As a consequence the Bayesian feedback will not be effective for this channel (see Observation \ref{obs1}).


\subsection{Amplitude damping channel}

According to Ref. \cite{MCM11}, the action of the amplitude damping channel on a qutrit, with damping parameter $p\in[0,1]$,
should be written as:
\begin{align}
{\cal A}(\rho)=\sum_{k=0}^2 A_k\rho A_k^\dag,
\end{align}
where the Kraus operators are
\begin{align}
A_0&=|0\rangle\langle 0|+\sqrt{1-p} |1\rangle\langle 1|+(1-p) |2\rangle\langle 2|, \\
A_1&=\sqrt{p} |0\rangle\langle 1|+\sqrt{2p(1-p)} |1\rangle\langle 2|, \\
A_2&=p |0\rangle\langle 2|.
\end{align}
In the canonical basis $\{|0\rangle,|1\rangle,|2\rangle\}$
\begin{align}\label{Abare}
A_0=\begin{pmatrix}
1 & 0 & 0 \\
0 & \sqrt{1-p} & 0 \\
0 & 0 & (1-p)
\end{pmatrix}, \quad
A_1=\begin{pmatrix}
0 & \sqrt{p} & 0 \\
0 & 0 & \sqrt{2p(1-p)} \\
0 & 0 & 0
\end{pmatrix},\quad
A_2=\begin{pmatrix}
0 & 0 & p\\
0 & 0 & 0 \\
0 & 0 & 0
\end{pmatrix}.
\end{align}
By polar decomposition, it results
\begin{align}\label{absA}
|A_0|=\begin{pmatrix}
1 & 0 & 0 \\
0 & \sqrt{1-p} & 0 \\
0 & 0 & (1-p)
\end{pmatrix}, \quad
|A_1|=\begin{pmatrix}
0 & 0 & 0 \\
0 & \sqrt{p} & 0 \\
0 & 0 & \sqrt{2p(1-p)}
\end{pmatrix},\quad
|A_2|=\begin{pmatrix}
0 & 0 & 0 \\
0 & 0 & 0 \\
0 & 0 & p
\end{pmatrix}.
\end{align}
They are commuting, hence, according to Observation \ref{obs1}, we should conclude that Bayesian feedback is not effective also in this case.

However, in Ref.\cite{MCM11} it was pointed out that the best Kraus decomposition of this channel for $F_1$
should come with a unitary
\begin{equation}
U=\begin{pmatrix}
\frac{e^{i\theta}}{\sqrt{2}} &  & \frac{e^{i\phi}}{\sqrt{2}}\\
 & 1 & \\
\frac{-e^{-i\phi}}{\sqrt{2}} &  & \frac{e^{i\theta}}{\sqrt{2}}
\end{pmatrix},
\end{equation}
applied to the Kraus operators \eqref{Abare}.
According to Observation \ref{obs2} we can safely set $\theta=\phi=0$, getting (in the canonical basis)
\begin{align}\label{Atilde}
\widetilde{A}_0=\frac{1}{\sqrt{2}}\begin{pmatrix}
1 & 0 & p \\
0 & \sqrt{1-p} & 0 \\
0 & 0 & (1-p)
\end{pmatrix}, \quad
\widetilde{A}_1=\begin{pmatrix}
0 & \sqrt{p} & 0 \\
0 & 0 & \sqrt{2p(1-p)} \\
0 & 0 & 0
\end{pmatrix},\quad
\widetilde{A}_2=\frac{-1}{\sqrt{2}}\begin{pmatrix}
1 & 0 & -p\\
0 & \sqrt{1-p} & 0 \\
0 & 0 & (1-p)
\end{pmatrix}.
\end{align}
By polar decomposition, it follows
\begin{align}
|\widetilde{A}_0|&=\frac{1}{\sqrt{2}}\begin{pmatrix}
\frac{ \left(\alpha+\overline{p}\right)^2 \sqrt{\overline{p} +p(p+\alpha)}}{ 1+\left(\alpha+\overline{p}\right)^2 }
+\frac{ \left(\alpha-\overline{p}\right)^2 \sqrt{\overline{p} +p(p-\alpha)}}{ 1+\left(\alpha-\overline{p}\right)^2 }
& 0 &
\frac{-\sqrt{ \overline{p} +p(p-\alpha)} + \sqrt{ \overline{p} +p(p+\alpha)}}{2\alpha} \\
0 & \sqrt{1-p} & 0 \\
\frac{-\sqrt{ \overline{p} +p(p-\alpha)} + \sqrt{ \overline{p} +p(p+\alpha)}}{2\alpha} & 0 &
\frac{\sqrt{\overline{p}+p(p+\alpha)}}{1+(\alpha+\overline{p})^2}
+\frac{\sqrt{\overline{p}+p(p-\alpha)}}{1+(\alpha-\overline{p})^2}
\end{pmatrix},\label{absA0t} \\
|\widetilde{A}_1|&=\begin{pmatrix}
0 & 0 & 0 \\
0 & \sqrt{p} & 0 \\
0 & 0 & \sqrt{2p(1-p)}
\end{pmatrix},  \label{absA1t}\\
|\widetilde{A}_2|&=\frac{1}{\sqrt{2}}\begin{pmatrix}
\frac{ \left(\alpha+\overline{p}\right)^2 \sqrt{\overline{p} +p(p+\alpha)}}{ 1+\left(\alpha+\overline{p}\right)^2 }
+\frac{ \left(\alpha-\overline{p}\right)^2 \sqrt{\overline{p} +p(p-\alpha)}}{ 1+\left(\alpha-\overline{p}\right)^2 }
& 0 &
\frac{\sqrt{ \overline{p} +p(p-\alpha)} - \sqrt{ \overline{p} +p(p+\alpha)}}{2\alpha} \\
0 & \sqrt{1-p} & 0 \\
\frac{\sqrt{ \overline{p} +p(p-\alpha)} - \sqrt{ \overline{p} +p(p+\alpha)}}{2\alpha} & 0 &
\frac{\sqrt{\overline{p}+p(p+\alpha)}}{1+(\alpha+\overline{p})^2}
+\frac{\sqrt{\overline{p}+p(p-\alpha)}}{1+(\alpha-\overline{p})^2}
\end{pmatrix},\label{absA2t}
\end{align}
where
\begin{align}
\overline{p}\equiv 1-p, \qquad \alpha\equiv\sqrt{2+p(p-2)}\equiv\sqrt{1+\overline{p}^2}.
\end{align}


Assuming to perform the same measurement at each step, we have
\begin{align}
F_n&=\frac{1}{9}\sum_{x_1,\ldots,x_n} \left( {\rm tr} \left| \widetilde{A}_{x_n}\right| \ldots \left|\widetilde{A}_{x_1}\right|
\right)^2,\\
F'_n&=
\frac{1}{9}\sum_{x_1,\ldots, x_n} \left( {\rm tr} \left|\widetilde{A}_{x_n} \ldots
\left|\widetilde{A}_{x_2} \,
\left|\widetilde{A}_{x_1}\right|\right|\right|
\right)^2.
\end{align}

In Fig.\ref{fig1} we show the difference $F'_n-F_n$ vs the damping parameter $p$ and the number $n$ of feedback steps.
Its positiveness
shows better performance of Bayesian feedback with respect to Markovian feedback.
The range of values of $p$ for which the quantity is significantly greater than zero, enlarges as $n$ increases.
As function of $n$ the advantage of Bayesian feedback, while initially increasing, tends later on to saturate.

\begin{figure}[H]
\begin{center}
          \includegraphics[width=0.65\textwidth]{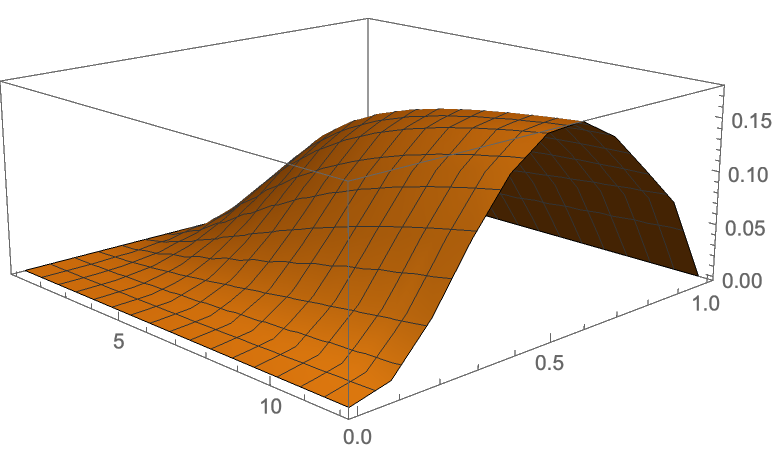}
          \caption{Difference $F'_n-F_n$ vs damping parameter $p$ and number $n$ of feedback steps.}
\label{fig1}
\end{center}
\end{figure}

Since the Kraus decomposition \eqref{Atilde} is unique (in the sense of Observation \ref{obs2}), the measurement will be the same at each step concerning the Markovian strategy.

However, we were not fully justified to assume the same measurement at each step for Bayesian feedback, as Proposition \ref{persi} does not extends to qutrit channels.
Thus, we have numerically studied the quantity
\begin{align}
F'_n=\max_{\{U^{(k)}\}}
\frac{1}{9}\sum_{x_1,\ldots, x_n} \left( {\rm tr} \left|\widetilde{A}^{(n)}_{x_n} \ldots
\left|\widetilde{A}^{(2)}_{x_2} \,
\left|\widetilde{A}^{(1)}_{x_1}\right|\right|\right|
\right)^2,
\end{align}
with
\begin{equation}
\left\{\begin{array}{lll}
\widetilde{A}^{(k)}_{x_k}=\sum_{y_k}U^{(k)}_{x_ky_k}\widetilde{A}^{(1)}_{y_k}, & & k>1\\ \\
\widetilde{A}^{(1)}_{x_1}\equiv \widetilde{A}_{x_1}
\end{array}\right..
\end{equation}

We created a grid for $p\in[0,1]$ with step 0.05. For every point in such a grid,
at each step, $k>1$,
we have generated $10^5$ random unitary operators $U^{(k)}$ by sampling from $U(3)$ according to the Haar measure.
Also in this case we have used the Ginibre ensemble \cite{Gin} consisting of matrices whose entries are independent and identically distributed standard normal complex random variables
and then ortho-normalizing such matrices \cite{Mez}.

The numerics show that the Kraus decomposition (measurement on the environment) must be the same at every step.


\section{Conclusion}

We studied discrete time feedback aimed at reclaiming quantum information after a channel action.
We compared Bayesian and Markovian strategies. It turns out that Bayesian feedback performs no better than Markovian feedback when used on qubit channels.
However, its superior performance appear in higher dimensional channels like e.g. the qutrit amplitude damping channel.

There are a number of problems that this study poses in future perspective.
First, find other examples of channels for which Bayesian feedback is more effective than Marokovian one.
Then, for these channels see if by increasing the dimension the advantage increases.
Last but not least, it would be desirable
to study the continuum limit of our discrete feedback model.


\section*{Acknowledgments}
The authors M. R. and S. M. acknowledge financial support from
``PNRR MUR project PE0000023-NQSTI".






\begin{thebibliography}{99}

\bibitem{GW03}
M. Gregoratti, and R. F. Werner,
\textit{Quantum lost and found},
J. Mod. Opt. \textbf{50}, 915 (2003).

\bibitem{GW04}
M. Gregoratti, and R. F. Werner,
\textit{On quantum error-correction by classical feedback in discrete time},
J. Math. Phys. \textbf{45}, 2600 (2004).

\bibitem{MCM11}
L. Memarzadeh, C. Cafaro, and S. Mancini,
\textit{Quantum information reclaiming after amplitude damping},
J. Phys. A: Math. Theor. \textbf{44}, 045304 (2011).

\bibitem{RNM20}
M. Rafiee, A. Nourmandipour, and S. Mancini,
\textit{Enforcing dissipative entanglement by feedback},
Phys. Lett. A \textbf{384}, 126748 (2020).

\bibitem{WMW02}
H. Wiseman, S. Mancini, and J. Wang,
\textit{Bayesian feedback versus Markovian feedback in a two-level atom},
Phys. Rev. A \textbf{66}, 013807 (2002).

\bibitem{MW21}
S. Mancini, and A. Winter,
\textit{A quantum leap in information theory},
World Scientific (2020).

\bibitem{HLA}
\textit{Handbook of Linear Algebra},
Ed. by L. Hogben, Chapman \& Hall/CRC (2007).

\bibitem{OMR23}
S. K. Oskouei, S. Mancini, and M. Rexiti,
\textit{Profitable entanglement for channel discrimination},
Proc. R. Soc. A \textbf{479}, 20220796 (2023).

\bibitem{R97}
A. Rasin,
\textit{Diagonalization of Quark Mass Matrices and the Cabibbo-Kobayashi-Maskawa Matrix},
arxiv:hep-ph/9708216 (1997).

\bibitem{Gin}
J. Ginibre,
\textit{Statistical Ensembles of Complex, Quaternion, and Real Matrices},
J. Math. Phys. \textbf{6}, 440 (1964).

\bibitem{Mez}
F. Mezzadri, \textit{How to generate random matrices from the classical
compact groups}, Notices of the AMS, \textbf{54}, 592 (2007).

\bibitem{RMM22}
M. Rexiti, L. Memarzadeh, and S. Mancini,
\textit{Discrimination of dephasing channel},
J. Phys. A: Math. Theor. \textbf{55}, 245301 (2022).

\end{thebibliography}
\end{document}